\documentclass[journal]{IEEEtran}[12pt]
\ifCLASSINFOpdf
  \usepackage[pdftex]{graphicx}
  \graphicspath{{../pdf/}{../jpeg/}}
  \DeclareGraphicsExtensions{.pdf,.jpeg,.png}
\else
  \usepackage[dvips]{graphicx}
  \graphicspath{{../eps/}}
  \DeclareGraphicsExtensions{.pdf}
\fi
\usepackage{epstopdf}
\usepackage{stfloats}
\usepackage[caption=false, font=footnotesize]{subfig}
\usepackage[cmex10]{amsmath}
\usepackage{array}
\usepackage{mdwmath}
\usepackage{mdwtab}
\usepackage{graphicx}
\usepackage{setspace}
\usepackage{amssymb,amsmath,amsthm,amsfonts}
\usepackage{color}
\usepackage{hyperref} 
\usepackage{threeparttable}
\usepackage{algpseudocode}
\usepackage{algorithmicx}
\usepackage{algorithm}  
\usepackage{cite}
\newtheorem{lemma}{Lemma}

\allowdisplaybreaks[4]
\begin{document}

\title{{Downlink Performance Analysis of Pinching Antenna Systems: WDMA or NOMA?} \\
}

\author{Han~Zhang,~\IEEEmembership{Student Member,~IEEE,}
	Bingxin~Zhang,~\IEEEmembership{Member,~IEEE,}
	Yizhe~Zhao,~\IEEEmembership{Member,~IEEE,}
	and Kun~Yang,~\IEEEmembership{Fellow,~IEEE}
	\thanks{Han Zhang, Bingxin Zhang and Kun Yang are with the State Key Laboratory of Novel Software Technology, Nanjing University, Nanjing, 210008, China, Institute of Intelligent Networks and Communications (NINE), Collaborative Innovation Center of Novel Software Technology and Industrialization, and School of Intelligent Software and Engineering, Nanjing University (Suzhou Campus), Suzhou, 215163, China (email: hanzhangl@smail.nju.edu.cn; bxzhang@nju.edu.cn; kunyang@nju.edu.cn).}
	\thanks{Yizhe Zhao is  with the School of Information and Communication Engineering, University of Electronic Science and Technology of China, Chengdu 611731, China (e-mail: yzzhao@uestc.edu.cn).}
}

\maketitle

\begin{abstract}
This paper presents an analytical framework for downlink pinching antenna systems (PASS) employing waveguide division multiple access (WDMA) and non-orthogonal multiple access (NOMA). A unified channel model is developed to capture antenna deployment, user spatial distribution, and path loss. Closed-form and single-integral expressions for the outage probability and average achievable rate are derived and validated via Monte Carlo simulations. The results show that NOMA achieves higher spectral efficiency at high transmit signal-to-noise ratio (SNR) due to successive interference cancellation (SIC), whereas WDMA offers more reliable performance at low to moderate SNR but suffers from an outage floor and rate saturation at high SNR. Moreover, WDMA performance is more sensitive to the user spatial distribution due to the spatially dependent inter-waveguide interference. These findings provide design insights for access-scheme selection and antenna placement in PASS.

\end{abstract}

\begin{IEEEkeywords}
Pinching antenna systems (PASS), waveguide division multiple access (WDMA), non-orthogonal multiple access (NOMA), outage probability, average achievable rate
\end{IEEEkeywords}

\section{Introduction}
{\huge T}he continuous evolution of wireless communication systems demands flexible and high-performance technologies to support emerging applications such as extended reality, smart healthcare, autonomous systems, and the metaverse. As a result, flexible antenna systems that can reconfigure wireless propagation conditions in real time have emerged as a promising solution~\cite{yang2025flexible}. Representative examples include reconfigurable intelligent surfaces (RIS)~\cite{pan2021reconfigurable}, fluid antennas (FAs)~\cite{11297439}, and movable antennas (MAs)~\cite{zhu2023movable}, which physically reposition antenna elements or reconfigure reflection coefficients to achieve favorable channel conditions.

However, these paradigms still face inherent limitations in mitigating large-scale propagation loss. RIS-assisted links suffer from the double large-scale fading effect~\cite{pan2021reconfigurable}, while fluid antenna systems and movable antenna systems mainly combat small-scale fading but remain constrained by the dominant free-space path loss~\cite{11297439,zhu2023movable}. To directly control large-scale propagation loss, pinching antenna systems (PASS) have been recently proposed as a waveguide-assisted flexible-antenna architecture~\cite{ding2025flexible}. By selectively activating pinching antennas (PAs) along dielectric waveguides, PASS can reduce propagation loss and adapt link geometry on demand. Building upon this architecture, existing studies have investigated PA position optimization and joint beamforming designs to improve coverage, system throughput, and multiuser performance~\cite{wang2025antenna,chen2025dynamic}. To further enhance multiuser scalability, waveguide division multiple access (WDMA)~\cite{zhao2025waveguide} and non-orthogonal multiple access (NOMA)~\cite{ren2025pinching} have been separately investigated in PASS.

Despite these advances, a unified analytical framework for comparing WDMA and NOMA in PASS under random user locations and spatial interference coupling is still lacking. Existing WDMA- and NOMA-related PASS studies primarily focus on optimization-oriented designs, leaving the analytical characterization of these schemes largely open. In particular, analytically tractable performance characterizations that jointly capture the effects of antenna deployment, user spatial distributions, and access-scheme-dependent interference have not yet been established.

To fill this gap, this paper develops an analytical framework to characterize the downlink performance of PASS with WDMA and NOMA. A unified channel model is established to capture antenna deployment and user spatial distribution, based on which closed-form and single-integral expressions for the outage probability and average achievable rate are derived and validated by Monte Carlo simulations. The results demonstrate that WDMA provides more reliable performance at low-to-moderate transmit SNR and with larger user separation, while NOMA achieves higher spectral efficiency at high transmit SNR owing to successive interference cancellation. Furthermore, by explicitly quantifying the impact of PA deployment height on large-scale fading and interference, the proposed analysis offers practical guidelines for selecting suitable multiple-access schemes under different transmit power levels and antenna deployment configurations.

\section{System Model}
\begin{figure}[!t]
  \centering
  \includegraphics[width=0.9\linewidth]{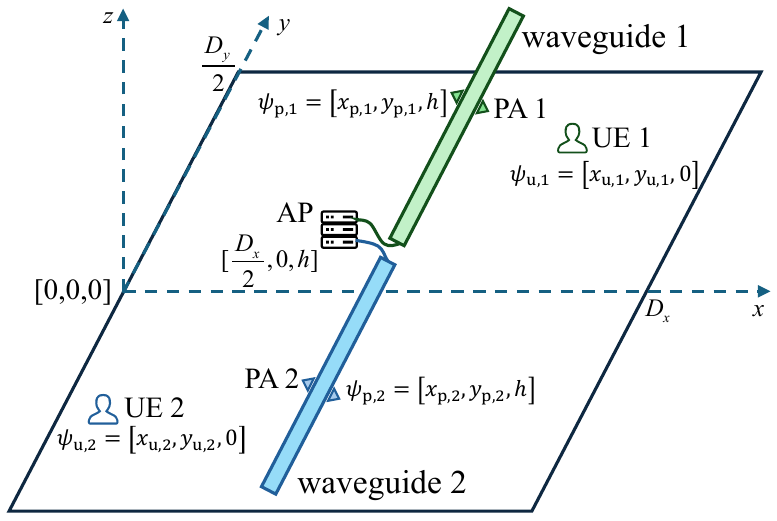}
  \caption{System model of PASS.}
  \label{fig:sys_models}
\end{figure}

This section presents the downlink system models of PASS employing WDMA and NOMA, as illustrated in Fig.~\ref{fig:sys_models}. 
WDMA and NOMA are studied under the same PASS geometry, service region, and LoS path-loss-based channel model, but with access-scheme-dependent UE indexing. Specifically, WDMA indexes UEs by their associated sub-regions and waveguides, whereas NOMA indexes UEs according to the near--far channel ordering required for SIC.

\subsection{WDMA}
We consider a downlink PASS with WDMA, as shown in Fig.~\ref{fig:sys_models}. 
The access point (AP) is connected to two distinct dielectric waveguides through independent feed links, where waveguide~$i$ is dedicated to UE~$i$, $i\in\{1,2\}$. 
On each waveguide, a single pinching antenna, denoted by PA~$i$, is deployed at height $h$ to serve its associated UE. 
The coordinates of UE~$i$ and PA~$i$ are denoted by $\boldsymbol{\psi}_{\mathrm{u},i}=[x_{u,i},y_{u,i},0]$ and $\boldsymbol{\psi}_{\mathrm{p},i}=[D_x/2,y_{u,i},h]$, respectively, such that PA~$i$ is vertically aligned with UE~$i$ to minimize the propagation distance.

The UEs are uniformly distributed within two non-overlapping sub-regions along the $y$-axis, i.e.,
\begin{align}
x_{u,i} \sim \mathcal{U}(0,D_x), \quad y_{u,1} \sim \mathcal{U}(0,D_y), \quad y_{u,2} \sim \mathcal{U}(-D_y,0).
\end{align}

The LoS channel from PA $i$ to UE $i$ follows the free-space path-loss model \cite{zhao2025waveguide}
\begin{align}
g_i
= \frac{\sqrt{\eta}\,
e^{-j\frac{2\pi}{\lambda}\sqrt{(x_{u,i}-D_x/2)^2+h^2}}}
{\sqrt{(x_{u,i}-D_x/2)^2+h^2}},
\end{align}
where $\eta=c_l^2/(16\pi^2 f_c^2)$ denotes the free-space path-loss factor, 
and $c_l$ is the speed of light. Similarly, the interfering channel from PA $i'$ to UE $i$ $(i'\neq i)$ is
\begin{align}
g_{i,i'}
= \frac{\sqrt{\eta}\,
e^{-j\frac{2\pi}{\lambda}\sqrt{(x_{u,i}-D_x/2)^2+h^2+(y_{u,i}-y_{u,i'})^2}}}
{\sqrt{(x_{u,i}-D_x/2)^2+h^2+(y_{u,i}-y_{u,i'})^2}}.
\end{align}

Assuming equal power allocation, the received signal at UE~$i$ is
\begin{align}
y_i 
&= e^{-j\frac{2\pi}{\lambda_{\rm g}}|y_{u,i}|}
g_i\sqrt{\tfrac{P}{2}}\,s_i
+ e^{-j\frac{2\pi}{\lambda_{\rm g}}|y_{u,i'}|}
g_{i,i'}\sqrt{\tfrac{P}{2}}\,s_{i'} 
+ n_i,
\end{align}
where $\lambda_{\rm g}$ denotes the wavelength in the dielectric waveguide and $n_i\sim\mathcal{CN}(0,\sigma_i^2)$. 
Here, $|y_{u,i}|$ and $|y_{u,i'}|$ represent the propagation distances from the AP feed point to the corresponding PAs along the waveguides. Since the waveguide phase factors have unit magnitude, the corresponding SINR is
\begin{align}
\gamma_i
= 
\frac{
\left|
e^{-j\frac{2\pi}{\lambda_{\rm g}}|y_{u,i}|} g_i
\right|^2
}{
\left|
e^{-j\frac{2\pi}{\lambda_{\rm g}}|y_{u,i'}|} g_{i,i'}
\right|^2
+
\tfrac{2\sigma_i^2}{P}
}
=
\frac{|g_i|^2}{|g_{i,i'}|^2 + \tfrac{2\sigma_i^2}{P}},
\label{eq:WDMA_SINR}
\end{align}

\subsection{NOMA}
We next consider a downlink PASS employing NOMA, as illustrated in Fig.~\ref{fig:sys_models}. 
Unlike WDMA, a single PA simultaneously serves two UEs via power-domain superposition coding. 
The PA is vertically aligned with the UE closest to the waveguide, denoted as UE~1, while the farther UE is denoted as UE~2. This placement reduces the large-scale path loss of the aligned UE and provides a clear decoding order for SIC. It also keeps the PA placement principle consistent with WDMA, so that NOMA and WDMA can be compared under the same PASS geometry by changing the activated PAs or waveguides and the transmitted signal structure.
Specifically, let $x_1,x_2\sim\mathcal{U}(0,D_x)$; then
\begin{align}
x_{u,1}&=\arg\min_{x\in\{x_1,x_2\}} |x-D_x/2|,\notag\\
x_{u,2}&=\arg\max_{x\in\{x_1,x_2\}} |x-D_x/2|.
\end{align}
The AP transmits $s=\sqrt{\alpha_1P}\,s_1+\sqrt{\alpha_2P}\,s_2$, where 
$\alpha_1+\alpha_2=1$ and $\alpha_1<\alpha_2$.

Following the LoS free-space path-loss model 
\cite{zhao2025waveguide,ren2025pinching}, the PA--UE channel is
\begin{align}
g_i=\frac{\sqrt{\eta}\,e^{-j\frac{2\pi}{\lambda}d_i}}{d_i},
\end{align}
where $d_1=\sqrt{(x_{u,1}-D_x/2)^2+h^2}$ and 
$d_2=\sqrt{G_2+A^2+h^2}$, with 
$G_i\triangleq(x_{u,i}-D_x/2)^2$, $A\triangleq|y_{u,2}-y_p|$, and $y_p=y_{u,1}$. 
The received signal at UE~$i$ is
\begin{align}
y_i=e^{-j\frac{2\pi}{\lambda_{\rm g}}|y_p|}g_i s+n_i,
\end{align}
where $n_i\sim\mathcal{CN}(0,\sigma_i^2)$. 
Since the waveguide phase factor has unit magnitude and is common to both superimposed components, it does not affect the SIC or SINR analysis. Therefore,

\begin{align}
\gamma_1=\frac{|e^{-j\frac{2\pi}{\lambda_{\rm g}}|y_p|} g_1|^2\alpha_1 P}{\sigma_1^2}
=\frac{\eta\alpha_1 P}{\sigma_1^2\!\left[(x_{u,1}-D_x/2)^2+h^2\right]},
\end{align}
and

\begin{align}
\gamma_2
=\frac{|e^{-j\frac{2\pi}{\lambda_{\rm g}}|y_p|} g_2|^2\alpha_2 P}{|e^{-j\frac{2\pi}{\lambda_{\rm g}}|y_p|} g_2|^2\alpha_1 P+\sigma_2^2}
=\frac{\eta\alpha_2 P}{\eta\alpha_1 P+\sigma_2^2(G_2+A^2+h^2)}.
\end{align}
This model highlights that WDMA exploits spatial separation across multiple waveguides, whereas NOMA relies on power-domain multiplexing and SIC at a single PA.


\section{Performance analysis of downlink transmission}
\subsection{WDMA}
In this subsection, we analyze the downlink performance of the proposed WDMA-enabled PASS. Our focus is placed on two key performance metrics, namely the outage probability and the average achievable rate of UE 1. The results for UE 2 follow by symmetry. 
\subsubsection{Outage Probability}
By defining $G(x)\triangleq (x-D_x/2)^2+h^2$ and 
$Y\triangleq(y_{u,1}-y_{p,2})^2$, substituting~\eqref{eq:WDMA_SINR} 
into the outage event $\gamma_1\le\gamma_{\rm th}$ gives
$
\frac{1}{G(x_{u,1})}
\le
\gamma_{\rm th}
\left[
\frac{1}{G(x_{u,1})+Y}
+
\frac{2\sigma_1^2}{\eta P}
\right].$
Solving this inequality with respect to $Y$ yields $Y\le T(x_{u,1})$, where
$
T(x)=
\frac{G(x)}
{\tfrac{1}{\gamma_{\mathrm{th}}}
-\tfrac{2\sigma_1^2}{\eta P}G(x)}
-G(x).$

\begin{lemma}
The outage probability of UE~1 in WDMA can be approximated by Gaussian--Chebyshev quadrature \cite{kumar2022performance} as
\begin{align}
P_{\mathrm{out}}
\approx
\frac{\pi}{2N}
\sum_{k=1}^{N}
\sqrt{1-t_k^2}\,
p_{\mathrm{out}}\!\left(\frac{D_x}{2}(t_k+1)\right),
\label{eq:WDMA_outage_GC}
\end{align}
where $t_k=\cos\frac{(2k-1)\pi}{2N}$, $N$ is the number of quadrature nodes, and
\begin{align}
p_{\mathrm{out}}(x)=
\begin{cases}
1, &
\tfrac{1}{\gamma_{\mathrm{th}}}
-\tfrac{2\sigma_1^2}{\eta P}G(x)\le 0,\\[5pt]
F_Y\!\big(T(x)\big), & \text{otherwise},
\end{cases}
\end{align}
with
\begin{align}
F_Y(y)=
\begin{cases}
0, & y\le 0,\\
\dfrac{y}{2D_y^2}, & 0<y\le D_y^2,\\[4pt]
\dfrac{4D_y\sqrt{y}-y}{2D_y^2}-1, & D_y^2<y\le 4D_y^2,\\[4pt]
1, & y>4D_y^2.
\end{cases}
\end{align}
\end{lemma}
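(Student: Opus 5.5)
The plan is to condition on the horizontal position $x_{u,1}$, compute the conditional outage probability exactly in terms of the distribution of $Y$, and then average over $x_{u,1}\sim\mathcal{U}(0,D_x)$ using Gaussian--Chebyshev quadrature. Since $y_{p,2}=y_{u,2}$, $Y$ depends only on the $y$-coordinates. These are independent of $x_{u,1}$, so conditioning on $x_{u,1}=x$ leaves $Y$'s distribution unchanged, and
$P_{\mathrm{out}}=\frac{1}{D_x}\int_0^{D_x}\Pr(\gamma_1\le\gamma_{\rm th}\mid x_{u,1}=x)\,dx$.

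\textbf{Step 1 (conditional event).} With $c\triangleq 2\sigma_1^2/(\eta P)$, the outage event derived before the lemma reads
\[
\frac{1}{G}\le \gamma_{\rm th}\Bigl[\frac{1}{G+Y}+c\Bigr].
\]
Rearranging, it is equivalent to $\frac{1}{G+Y}\ge \frac{1}{\gamma_{\rm th}G}-\frac{c}{1}$, i.e. $\frac{1}{G+Y}\ge \frac{1/\gamma_{\rm th}-cG}{G}$. Two cases arise:
\begin{itemize}
\item If $1/\gamma_{\rm th}-cG(x)\le 0$, the right-hand side is nonpositive, so the inequality holds for every $Y\ge 0$. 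Equivalently, even the interference-free SNR $1/(cG)$ does not exceed $\gamma_{\rm th}$. Hence $p_{\mathrm{out}}(x)=1$.
\item Otherwise both sides are positive, and inverting gives $G+Y\le G/(1/\gamma_{\rm th}-cG)$, i.e. $Y\le T(x)$. Hence $p_{\mathrm{out}}(x)=F_Y(T(x))$.
\end{itemize}
Care is needed only with the sign when inverting, which is exactly what the case split handles.

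\textbf{Step 2 (distribution of $Y$).} Let $Z=y_{u,1}-y_{u,2}$, the sum of two independent $\mathcal{U}(0,D_y)$ variables, since $-y_{u,2}\sim\mathcal{U}(0,D_y)$. Then $Z$ is triangular on $[0,2D_y]$, with
\[
F_Z(z)=\begin{cases} \dfrac{z^2}{2D_y^2}, & 0\le z\le D_y,\\[4pt] 1-\dfrac{(2D_y-z)^2}{2D_y^2}, & D_y<z\le 2D_y.\end{cases}
\]
Because $Z\ge 0$, we have $F_Y(y)=F_Z(\sqrt y)$ for $y>0$. Substituting and expanding $(2D_y-\sqrt y)^2$ yields the stated branches $y/(2D_y^2)$ and $(4D_y\sqrt y-y)/(2D_y^2)-1$, together with the trivial branches $0$ for $y\le 0$ and $1$ for $y>4D_y^2$.

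\textbf{Step 3 (quadrature).} Substitute $x=\frac{D_x}{2}(t+1)$, so that
\[
P_{\mathrm{out}}=\frac12\int_{-1}^{1}p_{\mathrm{out}}\!\Bigl(\frac{D_x}{2}(t+1)\Bigr)dt .
\]
Write the integrand as $\frac{f(t)\sqrt{1-t^2}}{\sqrt{1-t^2}}$ and apply the $N$-node Chebyshev--Gauss rule $\int_{-1}^1 \frac{g(t)}{\sqrt{1-t^2}}dt\approx\frac{\pi}{N}\sum_k g(t_k)$ with $t_k=\cos\frac{(2k-1)\pi}{2N}$. This gives exactly the factor $\frac{\pi}{2N}\sqrt{1-t_k^2}$ in \eqref{eq:WDMA_outage_GC}.

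The main subtlety is not algebraic but the accuracy claim. $p_{\mathrm{out}}$ is only piecewise smooth in $x$: it has kinks where $T(x)$ crosses $0$, $D_y^2$, or $4D_y^2$, and a jump to $1$ where $1/\gamma_{\rm th}=cG(x)$. The expression is therefore an approximation that converges as $N$ grows rather than an exact identity. I would state it as such and, if desired, note that the approximation could be sharpened by splitting the integral at these breakpoints, which are explicit via $G(x)$.
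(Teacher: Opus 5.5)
Your proposal is correct and follows the same route as the paper. It uses the case split on the sign of $\frac{1}{\gamma_{\rm th}}-\frac{2\sigma_1^2}{\eta P}G(x)$ leading to $Y\le T(x)$, the triangular law of $y_{u,1}-y_{u,2}$ giving $F_Y(y)=F_Z(\sqrt{y})$, averaging over the independent $x_{u,1}\sim\mathcal{U}(0,D_x)$, the substitution $x=\frac{D_x}{2}(t+1)$, and the Chebyshev--Gauss rule rewritten to carry the $\sqrt{1-t_k^2}$ factor.

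One small correction to your closing accuracy remark: $p_{\mathrm{out}}$ has no jump where $\frac{1}{\gamma_{\rm th}}=\frac{2\sigma_1^2}{\eta P}G(x)$. As that denominator tends to $0^+$, $T(x)\to+\infty$, so $F_Y(T(x))$ already equals $1$ once $T(x)>4D_y^2$. Moreover, $F_Y$ is $C^1$ on $y>0$, so $p_{\mathrm{out}}$ is continuous, with genuine kinks only where $T(x)$ crosses $0$ (possible only when $\gamma_{\rm th}<1$).
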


\begin{proof}
Substituting \eqref{eq:WDMA_SINR} into the outage event 
$\gamma_1\le\gamma_{\rm th}$ gives
\begin{align}
\frac{1}{G(x_{u,1})}
\le
\gamma_{\rm th}
\left[
\frac{1}{G(x_{u,1})+Y}
+
\frac{2\sigma_1^2}{\eta P}
\right].
\end{align}
After rearranging, we have
\begin{align}
\frac{1}{\gamma_{\rm th}}
-
\frac{2\sigma_1^2}{\eta P}G(x_{u,1})
\le
\frac{G(x_{u,1})}{G(x_{u,1})+Y}.
\end{align}
If the left-hand side is non-positive, the above inequality always holds, and thus the conditional outage probability equals one. Otherwise, solving the inequality with respect to $Y$ yields
\begin{align}
Y
\le
\frac{G(x_{u,1})}
{\tfrac{1}{\gamma_{\rm th}}
-\tfrac{2\sigma_1^2}{\eta P}G(x_{u,1})}
-
G(x_{u,1})
=
T(x_{u,1}).
\end{align}

Since $y_{u,1}$ and $y_{p,2}$ are uniformly distributed over two adjacent intervals of length $D_y$, their difference follows a triangular distribution over $(0,2D_y]$. Hence, for $y\ge0$,
\begin{align}
F_Y(y)
=
\Pr\{Y\le y\}
=
\int_0^{\sqrt{y}} f_U(u)\,du,
\end{align}
which gives the piecewise CDF of $Y$ in the lemma.

Since $x_{u,1}$ is independent of $Y$, applying the law of total expectation gives
\begin{align}
P_{\mathrm{out}}
&=
\mathbb{E}_{x_{u,1}}
\!\left[
\Pr\{Y\le T(x_{u,1})\mid x_{u,1}\}
\right] \notag\\
&=
\frac{1}{D_x}
\int_0^{D_x}
\Pr\{Y\le T(x)\}\,dx.
\end{align}
When the denominator of $T(x)$ is non-positive, the conditional outage probability is one. Otherwise, the inner probability is obtained by substituting $T(x)$ into the CDF of $Y$, i.e.,
\begin{align}
\Pr\{Y\le T(x)\}=F_Y\!\big(T(x)\big).
\end{align}
Hence,
\begin{align}
P_{\mathrm{out}}
=
\frac{1}{D_x}
\int_0^{D_x}
p_{\mathrm{out}}(x)\,dx.
\end{align}

Finally, applying the transformation
\begin{align}
x=\frac{D_x}{2}(t+1),\qquad dx=\frac{D_x}{2}dt,
\end{align}
we obtain
\begin{align}
P_{\mathrm{out}}
=
\frac{1}{2}
\int_{-1}^{1}
p_{\mathrm{out}}\!\left(\frac{D_x}{2}(t+1)\right)dt.
\end{align}
Using the Gaussian--Chebyshev quadrature rule
\begin{align}
\int_{-1}^{1} f(t)dt
\approx
\frac{\pi}{N}
\sum_{k=1}^{N}
\sqrt{1-t_k^2}\,f(t_k),
\end{align}
with $t_k=\cos\frac{(2k-1)\pi}{2N}$, we obtain \eqref{eq:WDMA_outage_GC}.
\end{proof}

\subsubsection{Average Achievable Rate}
\begin{lemma}
Let $B\triangleq 2\sigma_1^2/(\eta P)$. The average achievable rate of UE~1 in WDMA can be efficiently approximated by Gaussian--Chebyshev quadrature \cite{kumar2022performance} as
\begin{align}
R_{\mathrm{avg}}
\approx
\frac{\pi}{2ND_y^2\ln2}
\sum_{k=1}^{N}\sqrt{1-t_k^2}\,
\Phi\!\left(\frac{D_x}{2}(t_k+1)\right),
\label{eq:WDMA_rate_GC}
\end{align}
where $t_k=\cos\frac{(2k-1)\pi}{2N}$, $N$ is the number of quadrature nodes, and
\begin{align}
\Phi(x)
&\triangleq
2P_1(D_y;x)-P_1(2D_y;x) \notag\\
&\quad
+2D_y\big[P_0(2D_y;x)-P_0(D_y;x)\big].
\end{align}
Here,
\begin{align}
P_0(y;x)
&=
\big[
\mathcal{J}_0(u;a(x),b(x))
-\mathcal{J}_0(u;c(x),d(x))
\big]_{0}^{y},\\
P_1(y;x)
&=
\big[
\mathcal{J}_1(u;a(x),b(x))
-\mathcal{J}_1(u;c(x),d(x))
\big]_{0}^{y},
\end{align}
with
\begin{align}
a(x)&=2G(x)+BG^2(x),\quad b(x)=1+BG(x),\notag\\
c(x)&=G(x)+BG^2(x),\quad d(x)=BG(x),
\end{align}
where $\mathcal{J}_0(\cdot)$ and $\mathcal{J}_1(\cdot)$ are obtained from the standard logarithmic integral identities in \cite[(2.733.1), (2.733.2)]{edition2007table}, and are given by
\begin{align}
\mathcal{J}_0(u;a,b)
&=u\ln(a+bu^2)-2u
+2\sqrt{\frac{a}{b}}\arctan\!\left(u\sqrt{\frac{b}{a}}\right),\\
\mathcal{J}_1(u;a,b)
&=\frac{(a+bu^2)\ln(a+bu^2)-(a+bu^2)}{2b}.
\end{align}
\end{lemma}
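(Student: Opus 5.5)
We start from $R_{\mathrm{avg}}=\mathbb{E}[\log_2(1+\gamma_1)]$ with $\gamma_1$ from \eqref{eq:WDMA_SINR}, written in terms of $G\triangleq G(x_{u,1})$ and $U\triangleq|y_{u,1}-y_{p,2}|$, where $Y=U^2$. With $B=2\sigma_1^2/(\eta P)$ we get $\gamma_1=\frac{1/G}{1/(G+U^2)+B}$. Hence
\begin{align}
1+\gamma_1=\frac{G+BG(G+U^2)+(G+U^2)}{G+BG(G+U^2)}
=\frac{a+bU^2}{c+dU^2},
\end{align}
with $a=2G+BG^2$, $b=1+BG$, $c=G+BG^2$, $d=BG$, matching the definitions in the lemma. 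Therefore $\ln(1+\gamma_1)=\ln(a+bU^2)-\ln(c+dU^2)$, and the logarithm of the rate splits into a difference of two terms of the same form.

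Conditioning on $x_{u,1}$, which is independent of $U$, we average over $U$ using the triangular density already used in the proof of the outage lemma. Since $y_{u,1}\sim\mathcal U(0,D_y)$ and $y_{p,2}=y_{u,2}\sim\mathcal U(-D_y,0)$, the density is $f_U(u)=u/D_y^2$ on $[0,D_y]$ and $(2D_y-u)/D_y^2$ on $[D_y,2D_y]$. Write $I_m(y)=\int_0^y u^m\big[\ln(a+bu^2)-\ln(c+du^2)\big]du$ for $m=0,1$. The conditional expectation is then
\begin{align}
\frac{1}{D_y^2}\Big\{I_1(D_y)+2D_y[I_0(2D_y)-I_0(D_y)]-[I_1(2D_y)-I_1(D_y)]\Big\}.
\end{align}
This equals $\Phi(x)/D_y^2$ once we identify $I_m(y)=P_m(y;x)$. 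The antiderivatives are elementary. Integrating $\ln(a+bu^2)$ by parts gives $u\ln(a+bu^2)-2u+2\sqrt{a/b}\arctan(u\sqrt{b/a})$, which is $\mathcal J_0$. The substitution $v=a+bu^2$ gives $\mathcal J_1$. Both are the cited Gradshteyn--Ryzhik identities.

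The main obstacle is that $d=BG$ can be zero only in the degenerate noiseless limit. For $B>0$ all of $a,b,c,d$ are positive, so the arctangent and square-root forms are valid and there is no singularity. The $\mathcal J_1$ term with $d$ in the denominator is therefore well defined, and I would remark on this positivity explicitly. One must also check the bookkeeping of the piecewise density so that the coefficients $2,-1,2D_y$ in $\Phi$ come out correctly.

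Finally, we write $R_{\mathrm{avg}}=\frac{1}{D_x\ln 2}\int_0^{D_x}\Phi(x)/D_y^2\,dx$. The change of variables $x=\frac{D_x}{2}(t+1)$ maps this to $\frac{1}{2D_y^2\ln2}\int_{-1}^1\Phi(\cdot)\,dt$. Applying the same Gaussian--Chebyshev rule as in the outage lemma yields \eqref{eq:WDMA_rate_GC}.
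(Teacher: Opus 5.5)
Your proposal is correct and follows the paper's proof essentially step for step. You rewrite $1+\gamma_1$ as $(a+bU^2)/(c+dU^2)$, average over the triangular density of $U$ to obtain $\Phi(x)/D_y^2$ via the antiderivatives $\mathcal{J}_0,\mathcal{J}_1$, and then substitute $x=\frac{D_x}{2}(t+1)$ and apply Gaussian--Chebyshev quadrature. Your added remark that $B>0$ (together with $G(x)\ge h^2>0$) makes $a,b,c,d$ strictly positive, so that the $\sqrt{c/d}$ and $1/(2d)$ terms are well defined, is a small piece of rigor the paper leaves implicit.
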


\begin{proof}
From \eqref{eq:WDMA_SINR}, the instantaneous achievable rate of UE~1 can be written as
\begin{align}
R_1
&=\log_2(1+\gamma_1) \notag\\
&=
\log_2\!\left(
1+\frac{U^2+G(x)}
{G(x)\big(1+B(U^2+G(x))\big)}
\right).
\end{align}
After rearranging the logarithmic term, we have
\begin{align}
R_1
=
\frac{1}{\ln2}
\ln
\frac{a(x)+b(x)u^2}
{c(x)+d(x)u^2}.
\end{align}

Since $y_{u,1}$ and $y_{p,2}$ are uniformly distributed over two adjacent intervals of length $D_y$, the difference $U=y_{u,1}-y_{p,2}$ follows the triangular PDF
\begin{align}
f_U(u)=
\begin{cases}
\dfrac{u}{D_y^2}, & 0<u\le D_y,\\[4pt]
\dfrac{2D_y-u}{D_y^2}, & D_y<u\le 2D_y,\\[4pt]
0, & \text{otherwise}.
\end{cases}
\end{align}
Thus, averaging $R_1$ over $x_{u,1}$ and $U$ gives
\begin{align}
R_{\mathrm{avg}}
&=
\frac{1}{D_xD_y^2\ln2}
\int_0^{D_x}
\Bigg[
\int_0^{D_y}
u
\ln
\frac{a(x)+b(x)u^2}
{c(x)+d(x)u^2}
\,du \notag\\
&\quad+
\int_{D_y}^{2D_y}
(2D_y-u)
\ln
\frac{a(x)+b(x)u^2}
{c(x)+d(x)u^2}
\,du
\Bigg]dx .
\end{align}
Using the definitions of $P_0(y;x)$ and $P_1(y;x)$, the inner integral can be simplified as
\begin{align}
&\int_0^{D_y}
u
\ln
\frac{a(x)+b(x)u^2}
{c(x)+d(x)u^2}
\,du \notag\\
&+
\int_{D_y}^{2D_y}
(2D_y-u)
\ln
\frac{a(x)+b(x)u^2}
{c(x)+d(x)u^2}
\,du \notag\\
&=
P_1(D_y;x)
+
2D_y\big[P_0(2D_y;x)-P_0(D_y;x)\big] \notag\\
&\quad
-\big[P_1(2D_y;x)-P_1(D_y;x)\big] \notag\\
&=
2P_1(D_y;x)-P_1(2D_y;x) \notag \\
&\quad +2D_y\big[P_0(2D_y;x)-P_0(D_y;x)\big] \notag\\
&=
\Phi(x).
\end{align}
Therefore,
\begin{align}
R_{\mathrm{avg}}
=
\frac{1}{D_xD_y^2\ln2}
\int_0^{D_x}\Phi(x)\,dx.
\end{align}

The expressions of $P_0(y;x)$ and $P_1(y;x)$ follow from the standard logarithmic integral identities
\begin{align}
\int \ln(a+bu^2)\,du
&=
u\ln(a+bu^2)-2u \notag \\
& \quad +2\sqrt{\frac{a}{b}}
\arctan\!\left(u\sqrt{\frac{b}{a}}\right),\\
\int u\ln(a+bu^2)\,du
&=
\frac{(a+bu^2)\ln(a+bu^2)-(a+bu^2)}{2b},
\end{align}
which lead to the definitions of $\mathcal{J}_0(\cdot)$ and $\mathcal{J}_1(\cdot)$ in the lemma.

Finally, applying the transformation
\begin{align}
x=\frac{D_x}{2}(t+1), \qquad dx=\frac{D_x}{2}dt,
\end{align}
we obtain
\begin{align}
R_{\mathrm{avg}}
=
\frac{1}{2D_y^2\ln2}
\int_{-1}^{1}
\Phi\!\left(\frac{D_x}{2}(t+1)\right)dt.
\end{align}
Using the Gaussian--Chebyshev quadrature \cite{kumar2022performance} rule with $t_k=\cos\frac{(2k-1)\pi}{2N}$, we obtain \eqref{eq:WDMA_rate_GC}.
\end{proof}

\subsubsection{High-SNR Asymptotic Analysis of WDMA}
To provide further theoretical insight, the high-SNR behavior of the WDMA-enabled PASS is summarized in the following lemma.

\begin{lemma}
\label{lem:WDMA_asym}
The outage probability admits the high-SNR approximation
\begin{align}
P_{\mathrm{out}}^{\infty}
\approx
\frac{\pi}{2N}
\sum_{k=1}^{N}
\sqrt{1-t_k^2}\,
F_Y\!\left(
(\gamma_{\rm th}-1)
G\!\left(\frac{D_x}{2}(t_k+1)\right)
\right),
\label{eq:WDMA_asym_OP}
\end{align}
where $t_k=\cos\frac{(2k-1)\pi}{2N}$ and $N$ is the number of quadrature nodes.
The average achievable rate converges to the finite ceiling
\begin{align}
R_{\mathrm{avg}}^{\infty}
\approx
\frac{\pi}{2ND_y^2\ln2}
\sum_{k=1}^{N}
\sqrt{1-t_k^2}\,
\bar{\Phi}\!\left(\frac{D_x}{2}(t_k+1)\right),
\label{eq:WDMA_asym_rate_single}
\end{align}
where
\begin{align}
\bar{\Phi}(x)
&\triangleq
2\bar{\mathcal{P}}_1(D_y;x)-\bar{\mathcal{P}}_1(2D_y;x) \notag\\
&\quad
+2D_y\big[\bar{\mathcal{P}}_0(2D_y;x)-\bar{\mathcal{P}}_0(D_y;x)\big],
\end{align}
and
\begin{align}
\bar{\mathcal{P}}_n(y;x)
\triangleq
\int_0^y u^n
\ln\!\left(2+\frac{u^2}{G(x)}\right)du,\quad n\in\{0,1\}.
\end{align}
\end{lemma}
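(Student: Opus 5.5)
The plan is to let $P\to\infty$, i.e., $B=2\sigma_1^2/(\eta P)\to 0$, inside the exact single-integral representations already obtained in the proofs of the two preceding lemmas. I will justify the interchange of the limit with the integral over $x$ and over $U$, and then apply the same Gaussian--Chebyshev step $x=\frac{D_x}{2}(t+1)$ as before. Since every quadrature step is inherited verbatim, the substance lies entirely in the pointwise limits of the integrands.

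\textbf{Outage probability.} First, the degenerate branch of $p_{\mathrm{out}}(x)$ vanishes. Since $G(x)\le D_x^2/4+h^2$ uniformly on $[0,D_x]$, the quantity $\frac{1}{\gamma_{\rm th}}-BG(x)$ is strictly positive for all $x$ once $P$ is large enough. Hence $p_{\mathrm{out}}(x)=F_Y(T(x))$ eventually. Second, setting $B=0$ gives
\begin{align}
T(x)\to \gamma_{\rm th}G(x)-G(x)=(\gamma_{\rm th}-1)G(x),
\end{align}
and this convergence is uniform in $x$. Third, $F_Y$ is continuous, since the three pieces match at $D_y^2$ and $4D_y^2$. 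Therefore $p_{\mathrm{out}}(x)\to F_Y\big((\gamma_{\rm th}-1)G(x)\big)$ pointwise, and dominated convergence with the bound $|p_{\mathrm{out}}|\le 1$ gives
\begin{align}
P_{\mathrm{out}}\to\frac{1}{D_x}\int_0^{D_x}F_Y\big((\gamma_{\rm th}-1)G(x)\big)\,dx.
\end{align}
The quadrature then yields \eqref{eq:WDMA_asym_OP}. I will remark that for $\gamma_{\rm th}\le1$ the argument is nonpositive, so the floor is zero. Otherwise the floor is strictly positive, which is the interference-limited outage floor.

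\textbf{Average rate.} From the proof of the previous lemma,
\begin{align}
1+\gamma_1=\frac{a(x)+b(x)u^2}{c(x)+d(x)u^2}.
\end{align}
As $B\to0$ we have $a\to 2G$, $b\to1$, $c\to G$ and $d\to0$, so
\begin{align}
\ln(1+\gamma_1)\to\ln\!\left(2+\frac{u^2}{G(x)}\right).
\end{align}
Because $\gamma_1$ is increasing in $P$ (the denominator decreases in $P$), the monotone convergence theorem lets me pass the limit through both integrals. The limit is finite because the integrand is bounded on the compact domain $[0,D_x]\times[0,2D_y]$, using $G\ge h^2>0$. I then split the $u$-integral with the triangular density $f_U$ exactly as in the derivation of $\Phi(x)$, with $P_n$ replaced by $\bar{\mathcal P}_n$, to obtain $\bar\Phi(x)$. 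Applying the same quadrature gives \eqref{eq:WDMA_asym_rate_single}.

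I will also note that $\bar{\mathcal P}_n$ is available in closed form. One can write
\begin{align}
\ln\!\left(2+\frac{u^2}{G}\right)=\ln(2G+u^2)-\ln G,
\end{align}
and then use $\mathcal J_n(u;2G,1)$ together with elementary integrals of the constant $\ln G$. The limit of $\mathcal J_n(\cdot;c,d)$ as $d\to0$ is singular because of the $1/(2b)$ factor, which is why I take the limit in the integrand rather than in the antiderivatives.

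The main obstacle is justifying the interchange of limit and integration rigorously; monotone and dominated convergence as described should suffice. A secondary point is recognizing that the $d\to0$ limit must be taken before integrating, since taking it in the antiderivatives $\mathcal J_n(\cdot;c,d)$ leads to a singular expression.
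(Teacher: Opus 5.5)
Your proposal is correct and follows essentially the same route as the paper. Both arguments let $B=2\sigma_1^2/(\eta P)\to 0$ pointwise: the paper does it in the SINR, giving $\gamma_1^{\infty}=1+Y/G(x)$ and hence the event $Y\le(\gamma_{\rm th}-1)G(x)$, while you do it in $T(x)$, which is equivalent. Both then average with $F_Y$ and the triangular density of $U$, and apply the same Gaussian--Chebyshev step.

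What you add is the explicit dominated/monotone-convergence justification for exchanging the limit with the integrals, which the paper leaves implicit. Your side remark on the antiderivatives is slightly overstated: only $\mathcal{J}_1$ carries the $1/(2b)$ factor, $\mathcal{J}_0(u;c,d)\to u\ln c$ is finite, and the bracketed difference $[\mathcal{J}_1]_0^{y}$ also has a finite limit. This plays no role in the proof.
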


\begin{proof}
Letting $P\to\infty$ in \eqref{eq:WDMA_SINR}, the noise term vanishes while the inter-waveguide interference remains. Therefore,
\begin{align}
\gamma_1^{\infty}
&=
\frac{|g_1|^2}{|g_{1,2}|^2} \notag\\
&=
\frac{\eta/G(x)}{\eta/(G(x)+Y)}
=
1+\frac{Y}{G(x)}.
\end{align}
The corresponding outage event is
\begin{align}
\gamma_1^{\infty}\le \gamma_{\rm th}
\quad \Longleftrightarrow \quad
Y\le (\gamma_{\rm th}-1)G(x).
\end{align}
Averaging over $x_{u,1}$ gives
\begin{align}
P_{\mathrm{out}}^{\infty}
=
\frac{1}{D_x}
\int_0^{D_x}
F_Y\!\big((\gamma_{\rm th}-1)G(x)\big)\,dx.
\end{align}
By applying the transformation $x=\frac{D_x}{2}(t+1)$, we have
\begin{align}
P_{\mathrm{out}}^{\infty}
=
\frac{1}{2}
\int_{-1}^{1}
F_Y\!\left(
(\gamma_{\rm th}-1)
G\!\left(\frac{D_x}{2}(t+1)\right)
\right)dt.
\end{align}
Using Gaussian--Chebyshev quadrature \cite{kumar2022performance} yields \eqref{eq:WDMA_asym_OP}.

For the average achievable rate, taking $B\to0$ in the instantaneous rate expression of UE~1 gives
\begin{align}
R_1^{\infty}
&=
\log_2\!\left(
1+\frac{U^2+G(x)}{G(x)}
\right) \notag\\
&=
\log_2\!\left(
2+\frac{U^2}{G(x)}
\right).
\end{align}
Using the triangular distribution of $U$, the average rate ceiling can be written as
\begin{align}
R_{\mathrm{avg}}^{\infty}
&=
\frac{1}{D_xD_y^2\ln2}
\int_0^{D_x}
\Bigg[
\int_0^{D_y}
u\ln\!\left(2+\frac{u^2}{G(x)}\right)du \notag\\
&\quad+
\int_{D_y}^{2D_y}
(2D_y-u)
\ln\!\left(2+\frac{u^2}{G(x)}\right)du
\Bigg]dx .
\end{align}
Using the definitions of $\bar{\mathcal{P}}_0(y;x)$ and 
$\bar{\mathcal{P}}_1(y;x)$, the inner integral becomes
\begin{align}
&\int_0^{D_y}
u\ln\!\left(2+\frac{u^2}{G(x)}\right)du
+
\int_{D_y}^{2D_y}
(2D_y-u)
\ln\!\left(2+\frac{u^2}{G(x)}\right)du \notag\\
&=
2\bar{\mathcal{P}}_1(D_y;x)
-\bar{\mathcal{P}}_1(2D_y;x)
+
2D_y\big[
\bar{\mathcal{P}}_0(2D_y;x)
-\bar{\mathcal{P}}_0(D_y;x)
\big] \notag\\
&=
\bar{\Phi}(x).
\end{align}
Thus,
\begin{align}
R_{\mathrm{avg}}^{\infty}
=
\frac{1}{D_xD_y^2\ln2}
\int_0^{D_x}
\bar{\Phi}(x)\,dx.
\end{align}
Applying $x=\frac{D_x}{2}(t+1)$ gives
\begin{align}
R_{\mathrm{avg}}^{\infty}
=
\frac{1}{2D_y^2\ln2}
\int_{-1}^{1}
\bar{\Phi}\!\left(\frac{D_x}{2}(t+1)\right)dt.
\end{align}
Using Gaussian--Chebyshev quadrature \cite{kumar2022performance} yields \eqref{eq:WDMA_asym_rate_single}.
\end{proof}

Lemma~\ref{lem:WDMA_asym} shows that WDMA becomes interference-limited at high SNR, leading to a generally non-zero outage floor when $\gamma_{\rm th}>1$ and a finite rate ceiling.


\subsection{NOMA}
Next, we characterize the downlink performance of the considered PASS under NOMA. We first derive the outage probabilities of UE 1 and UE 2, and then present the corresponding average achievable rates.

\subsubsection{Outage Probability}
The outage events of the two NOMA users can be first rewritten into tractable geometric forms. 
For UE~1, the outage probability under a target threshold $\gamma_{\rm th}$ is
\begin{align}
P_{{\rm out},1}
&=\Pr\!\left\{\frac{\eta\alpha_1P}{\sigma_1^2\big((x_{u,1}-D_x/2)^2+h^2\big)}\le \gamma_{\rm th}\right\} \notag \\
&=\Pr\!\left\{G_1 \ge C_1\right\},
\end{align}
where $G_1=(x_{u,1}-D_x/2)^2$ and 
$C_1\triangleq \frac{\eta\alpha_1P}{\gamma_{\rm th}\sigma_1^2}-h^2$.
For UE~2, the outage probability can be expressed as
\begin{align}
P_{{\rm out},2}
=
\Pr\!\left\{(x_{u,2}-D_x/2)^2+(y_{u,2}-y_p)^2 \ge C_2\right\},
\end{align}
where 
$C_2 \triangleq \frac{\eta\alpha_2P}{\gamma_{\rm th}\sigma_2^2}
-\frac{\eta\alpha_1P}{\sigma_2^2}-h^2$.

\begin{lemma}
The outage probability of UE~1 is given by
\begin{align}
P_{{\rm out},1}=
\begin{cases}
1, & C_1\le 0,\\[4pt]
1-\dfrac{4\sqrt{C_1}}{D_x}+\dfrac{4C_1}{D_x^2}, 
& 0<C_1<\big(\tfrac{D_x}{2}\big)^2,\\[10pt]
0, & C_1\ge \big(\tfrac{D_x}{2}\big)^2.
\end{cases}
\label{eq:UE 1-outage}
\end{align}
The outage probability of UE~2 is given by
\begin{align}
P_{{\rm out},2}
= \frac{4}{D_x^{2}}
\Big(\big[\Phi_1(m)\big]_{M_1}^{M_2}
+ \big[\Phi_2(m)\big]_{M_2}^{M_3}
+ \big[\Phi_3(m)\big]_{M_3}^{M_4}\Big),
\label{eq:UE 2-outage}
\end{align}
where
\begin{align}
M_4&=\Big(\tfrac{D_x}{2}\Big)^2,\quad
M_1=\big[C_2-4D_y^2\big]_{0}^{M_4}, \notag\\
M_2&=\big[C_2-D_y^2\big]_{0}^{M_4},\quad
M_3=\big[C_2\big]_{0}^{M_4},
\end{align}
with $[z]_{0}^{M_4}\triangleq \min\{\max(z,0),M_4\}$, and
\begin{align}
\Phi_1(m)
&=2m+\frac{4}{3D_y}(C_2-m)^{3/2}
+\frac{1}{2D_y^{2}}\!\left(C_2m-\frac{m^{2}}{2}\right),\\
\Phi_2(m)
&=m-\frac{C_2m}{2D_y^{2}}+\frac{m^{2}}{4D_y^{2}},\\
\Phi_3(m)&=m.
\end{align}
In addition, if $C_2\le0$, then $P_{{\rm out},2}=1$; if 
$C_2\ge (D_x/2)^2+(2D_y)^2$, then $P_{{\rm out},2}=0$.
\end{lemma}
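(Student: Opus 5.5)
The plan is to reduce both outage events to the CDFs of simple order statistics and then integrate piecewise, reusing the triangular law of the $y$-separation from the WDMA analysis.

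\textbf{UE~1.} Let $W_j\triangleq|x_j-D_x/2|$ for $j\in\{1,2\}$. Since $x_j\sim\mathcal{U}(0,D_x)$, each $W_j$ is uniform on $[0,D_x/2]$, and $W_1,W_2$ are independent. By construction, $\sqrt{G_1}=\min\{W_1,W_2\}$ and $\sqrt{G_2}=\max\{W_1,W_2\}$. For $C_1\le0$ the event $G_1\ge C_1$ is certain, so $P_{{\rm out},1}=1$. For $C_1\ge(D_x/2)^2$ it is a null event, so $P_{{\rm out},1}=0$. In the middle range, independence gives
\begin{align}
\Pr\{G_1\ge C_1\}=\Pr\{W_1\ge\sqrt{C_1}\}^2=\Big(1-\tfrac{2\sqrt{C_1}}{D_x}\Big)^2 .
\end{align}
Expanding the square yields \eqref{eq:UE 1-outage} directly.

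\textbf{UE~2.} The key observation is that $\Pr\{\max\{W_1,W_2\}\le w\}=(2w/D_x)^2$. Hence $M\triangleq G_2$ satisfies $\Pr\{M\le m\}=4m/D_x^2$ on $[0,M_4]$. In other words, $G_2$ is \emph{uniform} on $[0,M_4]$ with density $4/D_x^2$, which explains the prefactor in \eqref{eq:UE 2-outage}. For the $y$-coordinate, I will assume, as in the WDMA geometry, that $y_p=y_{u,1}$ and $y_{u,2}$ lie in the two adjacent length-$D_y$ intervals. Then $A$ has the triangular PDF $f_U$, and $A^2$ has the CDF $F_Y$ from the WDMA outage lemma. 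This assumption, together with independence of $A$ from the $x$-ordering, is what must be stated carefully, since the NOMA subsection does not spell out the $y$-distribution explicitly. Conditioning on $M=m$ gives
\begin{align}
P_{{\rm out},2}=\frac{4}{D_x^2}\int_0^{M_4}\big[1-F_Y(C_2-m)\big]\,dm .
\end{align}

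\textbf{Piecewise integration.} I then split the $m$-range according to which branch of $F_Y$ applies to $C_2-m$:
\begin{itemize}
\item For $m<C_2-4D_y^2$, the integrand is $0$.
\item For $C_2-4D_y^2\le m<C_2-D_y^2$, the integrand is $2-\frac{2}{D_y}\sqrt{C_2-m}+\frac{C_2-m}{2D_y^2}$, whose antiderivative is $\Phi_1$.
\item For $C_2-D_y^2\le m<C_2$, the integrand is $1-\frac{C_2-m}{2D_y^2}$, with antiderivative $\Phi_2$.
\item For $m\ge C_2$, the integrand is $1$, with antiderivative $\Phi_3=m$.
\end{itemize}
Intersecting each interval with $[0,M_4]$ produces exactly the clipped breakpoints $M_1\le M_2\le M_3\le M_4$ via $[\cdot]_0^{M_4}$. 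The clipping also handles automatically the cases where some pieces are empty.

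The two extreme cases follow from the same formula. If $C_2\le0$, then all $M_k$ except $M_4$ equal $0$, so only $\Phi_3$ contributes and $P_{{\rm out},2}=\frac{4}{D_x^2}M_4=1$. If $C_2\ge M_4+4D_y^2$, then $M_1=M_4$ and every piece vanishes, so $P_{{\rm out},2}=0$.

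I expect the main obstacle to be the bookkeeping of these breakpoints: verifying that the clipped limits produce the correct integral in every regime of $C_2$ relative to $D_y^2$, $4D_y^2$ and $M_4$. Once the uniformity of $G_2$ is established, the antiderivatives themselves are routine.
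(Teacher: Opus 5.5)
Your proposal is correct and takes essentially the same route as the paper: the min-of-two-uniforms CDF for $G_1$, then conditioning on UE~2's ordered $x$-coordinate. Your observation that $G_2$ is uniform on $[0,M_4]$ is exactly what the paper gets from the density $4|x-D_x/2|/D_x^2$ and the substitution $m=(x-D_x/2)^2$; the three-piece integration against the triangular law of $A$ is the same, and the only cosmetic difference is that you recover the two extreme cases of $C_2$ from the clipped formula rather than arguing them directly.
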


\begin{proof}
For UE~1, the outage event has been rewritten as
\begin{align}
P_{{\rm out},1}=\Pr\{G_1\ge C_1\}.
\end{align}
Since UE~1 is the user closer to the waveguide center, for
$0<g<\big(D_x/2\big)^2$, the event $G_1\le g$ means that at least one of the two candidate horizontal coordinates lies within the interval
$\big[D_x/2-\sqrt{g},D_x/2+\sqrt{g}\big]$. Hence,
\begin{align}
F_{G_1}(g)
&=1-\left(1-\frac{2\sqrt{g}}{D_x}\right)^2 \notag\\
&=\frac{4\sqrt{g}}{D_x}-\frac{4g}{D_x^2}.
\end{align}
Therefore,
\begin{align}
P_{{\rm out},1}
=
1-F_{G_1}(C_1),
\end{align}
which directly yields the piecewise expression in \eqref{eq:UE 1-outage}.

For UE~2, conditioning on $x_{u,2}=x$ gives
\begin{align}
P_{{\rm out},2|x}
=
\Pr\!\left\{
(y_{u,2}-y_p)^2
\ge
C_2-(x-D_x/2)^2
\right\}.
\end{align}
Averaging over the ordered horizontal coordinate of UE~2 gives
\begin{align}
P_{{\rm out},2}
=
\int_0^{D_x}
P_{{\rm out},2|x}
\frac{4|D_x/2-x|}{D_x^2}
\,dx.
\end{align}
Since $A=|y_{u,2}-y_p|$ has the same distribution as $U$ in the WDMA case, its CDF follows the same piecewise form and is omitted here for brevity. Let $r(x)\triangleq \sqrt{\max\{0,\,C_2-(x-D_x/2)^2\}}$, then
\begin{align}
P_{{\rm out},2|x}=
\begin{cases}
1, & C_2\le (x-D_x/2)^2,\\[3pt]
1-F_A\!\big(r(x)\big), & 0<r(x)\le 2D_y,\\[3pt]
0, & r(x)>2D_y.
\end{cases}
\end{align}
Applying the change of variable $m=(x-D_x/2)^2$, the above integral becomes
\begin{align}
P_{{\rm out},2}
=
\frac{4}{D_x^2}
\int_0^{M_4} g(m)\,dm,
\end{align}
where $g(m)$ denotes the corresponding conditional outage probability after the change of variable.

Since $A=|y_{u,2}-y_p|$ has the same distribution as $U$ in the WDMA case, the integral over $m$ can be divided according to the same triangular distribution as
\begin{align}
P_{{\rm out},2}
&=
\frac{4}{D_x^2}
\Bigg[
\int_{M_1}^{M_2}
\left(
2-\frac{2\sqrt{C_2-m}}{D_y}
+\frac{C_2-m}{2D_y^2}
\right)dm \notag\\
&\quad+
\int_{M_2}^{M_3}
\left(
1-\frac{C_2-m}{2D_y^2}
\right)dm
+
\int_{M_3}^{M_4}1\,dm
\Bigg].
\end{align}
Taking the antiderivatives of the three integrands gives
\begin{align}
P_{{\rm out},2}
=
\frac{4}{D_x^2}
\Big(
[\Phi_1(m)]_{M_1}^{M_2}
+
[\Phi_2(m)]_{M_2}^{M_3}
+
[\Phi_3(m)]_{M_3}^{M_4}
\Big),
\end{align}
which proves \eqref{eq:UE 2-outage}.

Finally, if $C_2\le0$, the event 
$(x_{u,2}-D_x/2)^2+(y_{u,2}-y_p)^2\ge C_2$ always holds, and thus
$P_{{\rm out},2}=1$. If 
$C_2\ge (D_x/2)^2+(2D_y)^2$, the event only occurs on a boundary set with zero probability, and thus $P_{{\rm out},2}=0$.
\end{proof}

\subsubsection{Average Achievable Rate}
Let $K \triangleq \eta\alpha_1P/\sigma_1^2$ and $c=D_x/2$. 
Since the PDF of $x_{u,1}$ is
$f_{x_{u,1}}(x)=2/D_x-4|x-c|/D_x^2$, $0\le x\le D_x$, the average achievable rate of UE~1 is
\begin{align}
R_{{\rm avg},1}
&=\int_0^{D_x}\!\log_2\!\Big(1+\frac{K}{G(x)}\Big)f_{x_{u,1}}(x)dx \notag \\
&=\frac{1}{\ln 2}\Bigg[
\frac{4}{D_x}\big(J_0(c; h^2{+}K, 1)-J_0(c; h^2, 1)\big)\notag\\
&\quad-\frac{8}{D_x^2}\big(J_1(c; h^2{+}K, 1)-J_1(c; h^2, 1)\big)\Bigg].
\end{align}

For UE~2, let $K_1\triangleq \eta\alpha_1P$, 
$K_2\triangleq \eta\alpha_2P$, $N_2\triangleq\sigma_2^2$, 
$M\triangleq(x_{u,2}-c)^2$, 
$\beta_1(m)=K_1+N_2(h^2+m)$, and 
$\beta_0(m)=\beta_1(m)+K_2$. 
The average achievable rate of UE~2 is
\begin{align}
R_{{\rm avg},2}
=
\frac{4}{D_x^2\ln2}
\int_0^{c^2}
\Delta H(m)\,dm,
\end{align}
where $\Delta H(m)\triangleq H(\beta_0(m))-H(\beta_1(m))$, and
\begin{align}
H(\beta)
&= \frac{\mathcal{J}_1(D_y;\beta,N_2)-\mathcal{J}_1(0;\beta,N_2)}{D_y^2}\notag \\
&\quad+\frac{2\big[\mathcal{J}_0(2D_y;\beta,N_2)-\mathcal{J}_0(D_y;\beta,N_2)\big]}{D_y} \notag\\
&\quad
-\frac{\mathcal{J}_1(2D_y;\beta,N_2)-\mathcal{J}_1(D_y;\beta,N_2)}{D_y^2}.
\end{align}
By applying Gaussian--Chebyshev quadrature \cite{kumar2022performance}, $R_{{\rm avg},2}$ can be approximated as
\begin{align}
R_{{\rm avg},2}
\approx
\frac{2c^2\pi}{D_x^2 N\ln2}
\sum_{k=1}^{N}\sqrt{1-x_k^2}\,
\Delta H(m_k),
\label{eq:UE 2-rate}
\end{align}
where $x_k=\cos\frac{(2k-1)\pi}{2N}$, 
$m_k=\frac{c^2}{2}(x_k+1)$, and $N$ is the number of quadrature nodes.

\subsubsection{High-SNR Asymptotic Analysis of NOMA}
Letting $P\to\infty$ yields the following high-SNR results. 
The outage probabilities of UE~1 and UE~2 become zero once
\begin{align}
P &\ge 
\frac{\gamma_{\rm th}\sigma_1^2\big[(D_x/2)^2+h^2\big]}
{\eta\alpha_1}, \label{eq:UE 1-threshold}\\
P &\ge 
\frac{\gamma_{\rm th}\sigma_2^2\big[(D_x/2)^2+(2D_y)^2+h^2\big]}
{\eta(\alpha_2-\gamma_{\rm th}\alpha_1)},
\label{eq:UE 2-threshold}
\end{align}
respectively, where the second threshold exists only when 
$\alpha_2>\gamma_{\rm th}\alpha_1$; otherwise, UE~2 is always in outage. 
For the average rates, UE~1 achieves full multiplexing gain,
i.e., $R_{{\rm avg},1}=\log_2P+\mathcal{O}(1)$, while UE~2 approaches the power-allocation-limited ceiling
\begin{align}
R_{{\rm avg},2}\to \log_2\!\left(1+\frac{\alpha_2}{\alpha_1}\right).
\end{align}
These results show that NOMA is noise-limited for UE~1 after SIC, but interference-limited for UE~2 at high SNR.


\section{Numerical results}
We validate the analysis by Monte Carlo simulations. Unless otherwise specified, the parameters are as follows. The carrier frequency is $f_c = 28~\mathrm{GHz}$, yielding $\lambda_c = c_l/f_c$ and the free-space path loss factor 
$\eta = \lambda_c^2/(16\pi^2)$. The total noise power is $\sigma^2 = -90~\mathrm{dBm}$, and the waveguide height is fixed at $h = 3~\mathrm{m}$. The outage threshold is set to $\gamma_{\mathrm{th}} = 5$, and the user region is a rectangle of size $D_x \times D_y = 10 \times 20~\mathrm{m}^2$. For NOMA transmission, the power allocation factors are $\alpha_1 = 0.05$ and $\alpha_2 = 0.95$. All outage probabilities are averaged over $10^5$ independent channel realizations, while the analytical integrals are evaluated using Gaussian--Chebyshev quadrature with $N=64$ nodes. The transmit SNR is defined as $\gamma_t = P_t/\sigma^2$. 

\begin{figure}[!t]
\centering
\subfloat[Outage probability.]{%
  \includegraphics[width=0.50\linewidth]{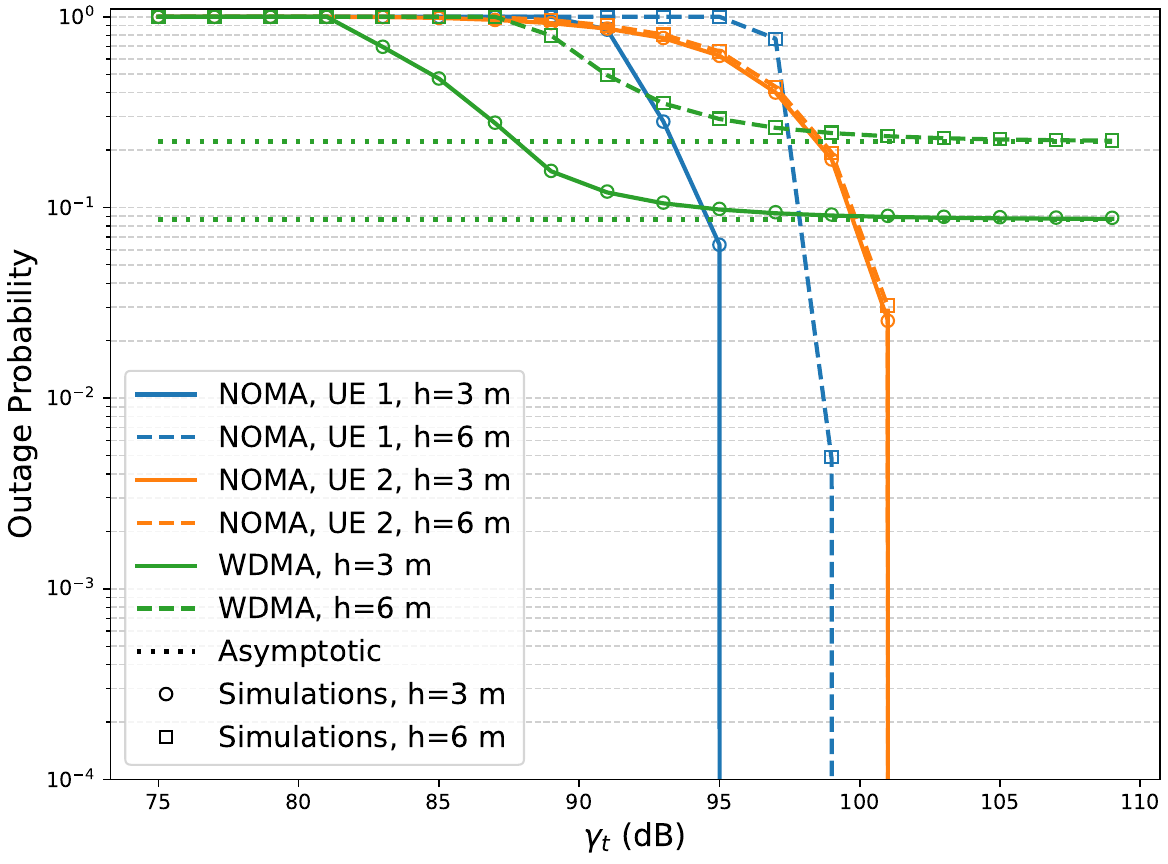}%
  \label{fig:op_snr_h}}
\hfill
\subfloat[Average achievable rate.]{%
  \includegraphics[width=0.50\linewidth]{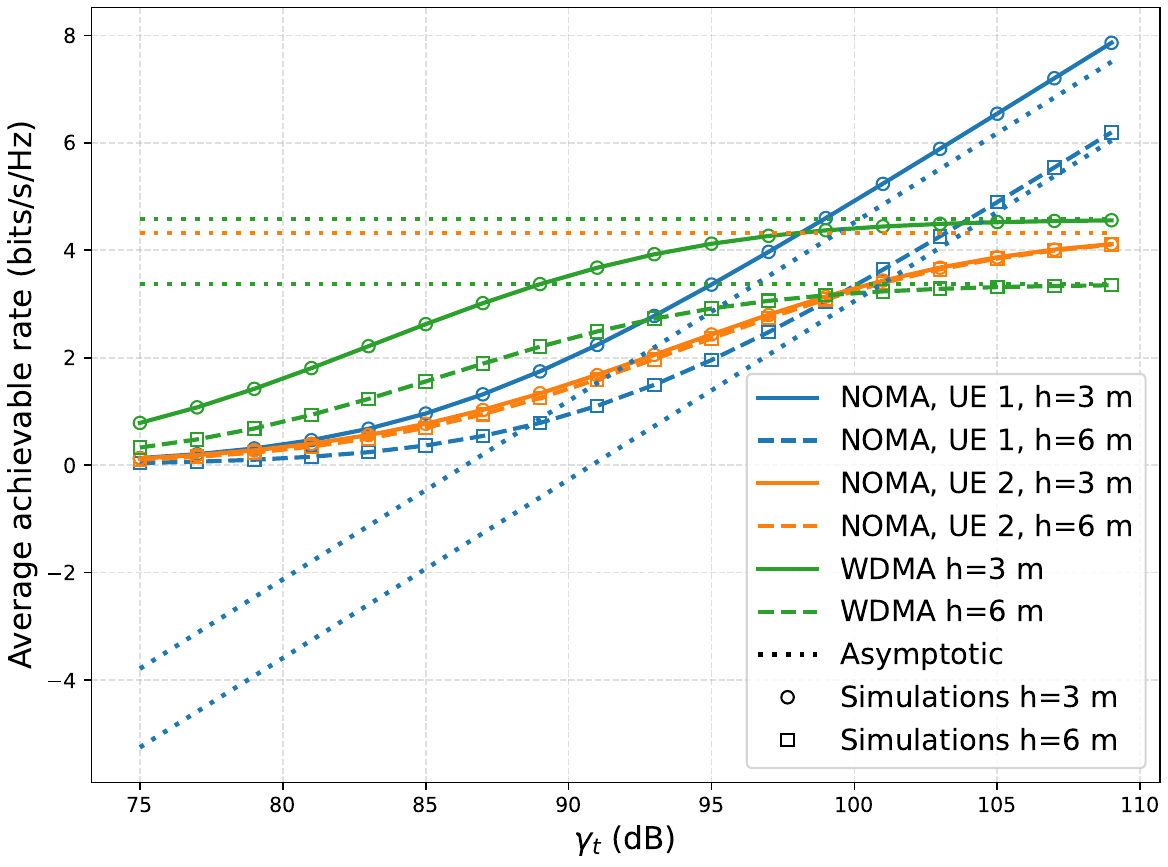}%
  \label{fig:rate_snr_h}}
\caption{Performance versus transmit SNR for different PA heights.}
\label{fig:perf_snr_h}
\end{figure}

Fig.~\ref{fig:perf_snr_h} shows the outage probability and average achievable rate versus transmit SNR for different PA heights. Increasing $h$ from $3$~m to $6$~m degrades both reliability and rate performance due to stronger large-scale path loss. NOMA UE~1 achieves the fastest rate growth at high SNR owing to SIC, whereas WDMA exhibits an outage floor and rate saturation caused by persistent inter-waveguide interference. 
In particular, WDMA converges to the derived outage floors and rate ceilings, while NOMA UE~2 approaches the power allocation limited rate ceiling $\log_2(1+\alpha_2/\alpha_1)$.

Next, we consider two spatial configurations, denoted by $\Omega_1$ and $\Omega_2$. In $\Omega_1$, both UEs are uniformly distributed with $D_y = 10\,\mathrm{m}$, representing a compact deployment. In $\Omega_2$, the UE 1 position follows $y_{\mathrm{u},1} \sim \mathcal{U}(10,\,20)\,\mathrm{m}$ and the UE 2 position follows $y_{\mathrm{u},2} \sim \mathcal{U}(-20,\,-10)\,\mathrm{m}$, corresponding to a more dispersed UE placement.

\begin{figure}[!t]
\centering
\subfloat[Outage probability.]{%
  \includegraphics[width=0.50\linewidth]{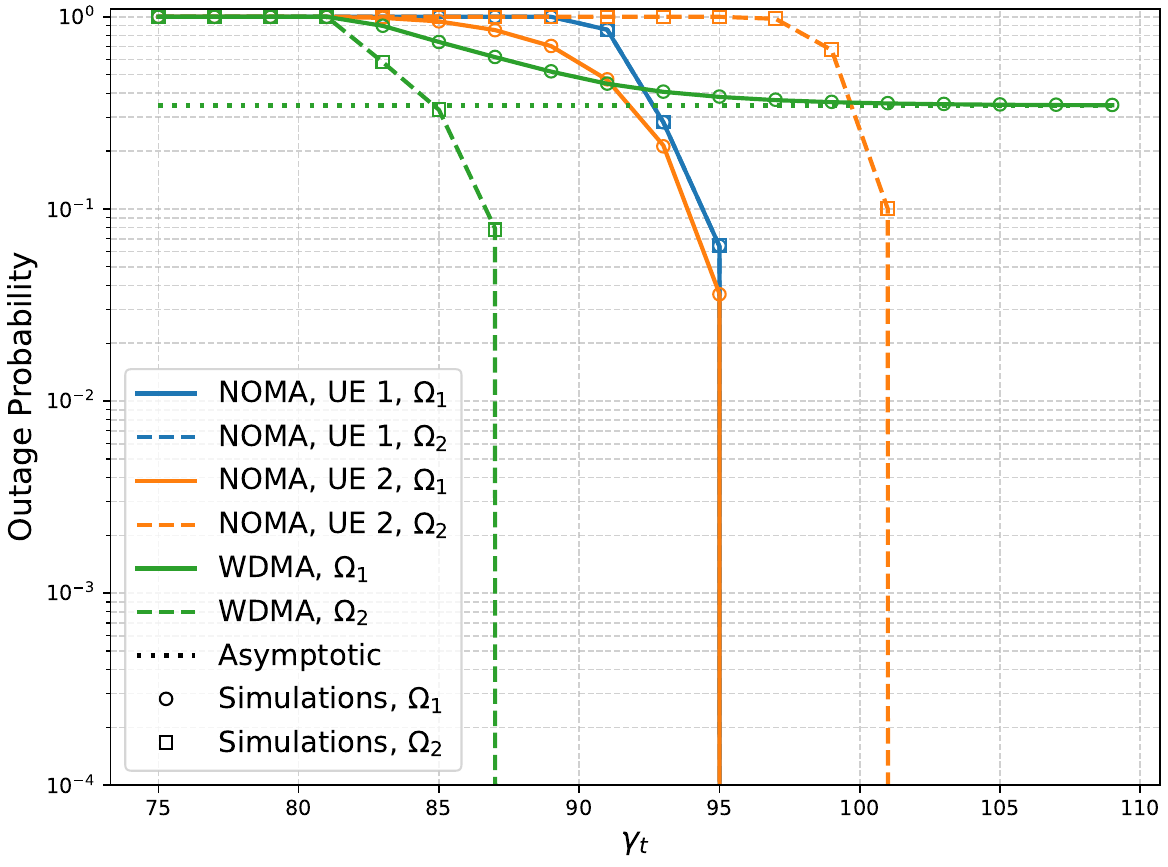}%
  \label{fig:op_snr_spatial}}
\hfill
\subfloat[Average achievable rate.]{%
  \includegraphics[width=0.50\linewidth]{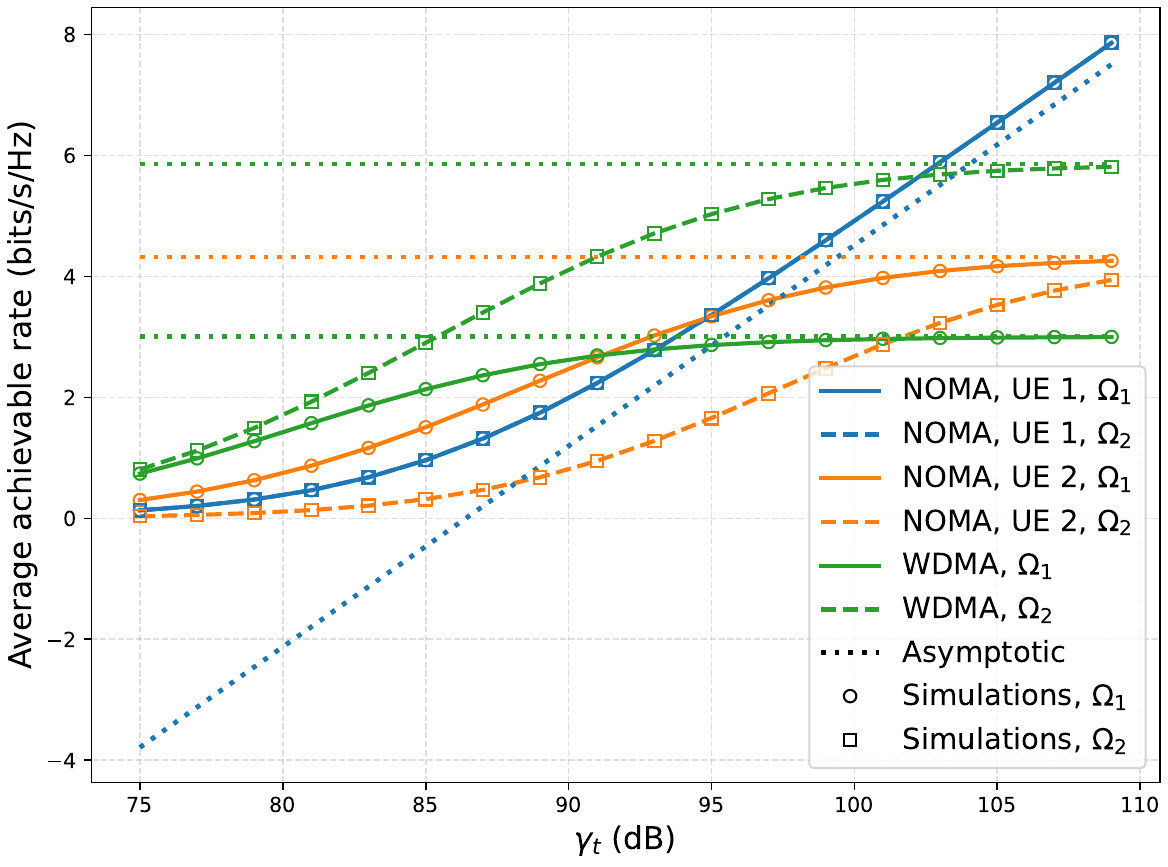}%
  \label{fig:rate_snr_spatial}}
\caption{Performance comparison of NOMA and WDMA under two spatial configurations $\Omega_1$ and $\Omega_2$.}
\label{fig:perf_spatial}
\end{figure}

Fig.~\ref{fig:perf_spatial} compares NOMA and WDMA under the two spatial configurations. In the compact deployment $\Omega_1$, WDMA suffers from an evident outage floor and rate saturation, while NOMA achieves higher rates at high transmit SNR. When the UE separation increases in $\Omega_2$, the inter-waveguide interference is significantly reduced, and WDMA achieves improved reliability and rate performance. In contrast, NOMA UE~2 degrades in $\Omega_2$ because the increased distance from the PA causes stronger path loss. 
The asymptotic curves further confirm that increasing the UE region separation reduces the WDMA outage floor and raises its rate ceiling, which is consistent with the derived dependence on the inter waveguide separation term $Y$.

\begin{figure}[!t]
\centering
\subfloat[Outage probability.]{%
  \includegraphics[width=0.50\linewidth]{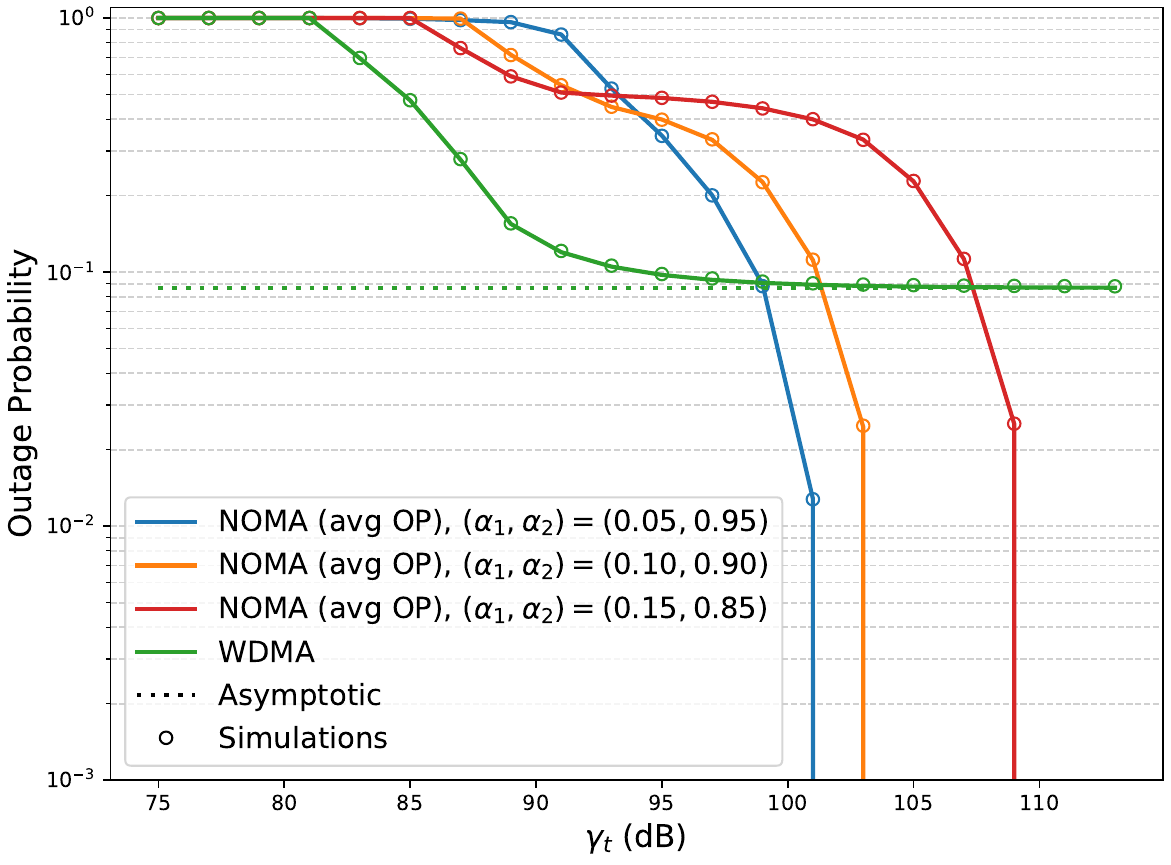}%
  \label{fig:op_snr_alpha}}
\hfill
\subfloat[Average achievable rate.]{%
  \includegraphics[width=0.50\linewidth]{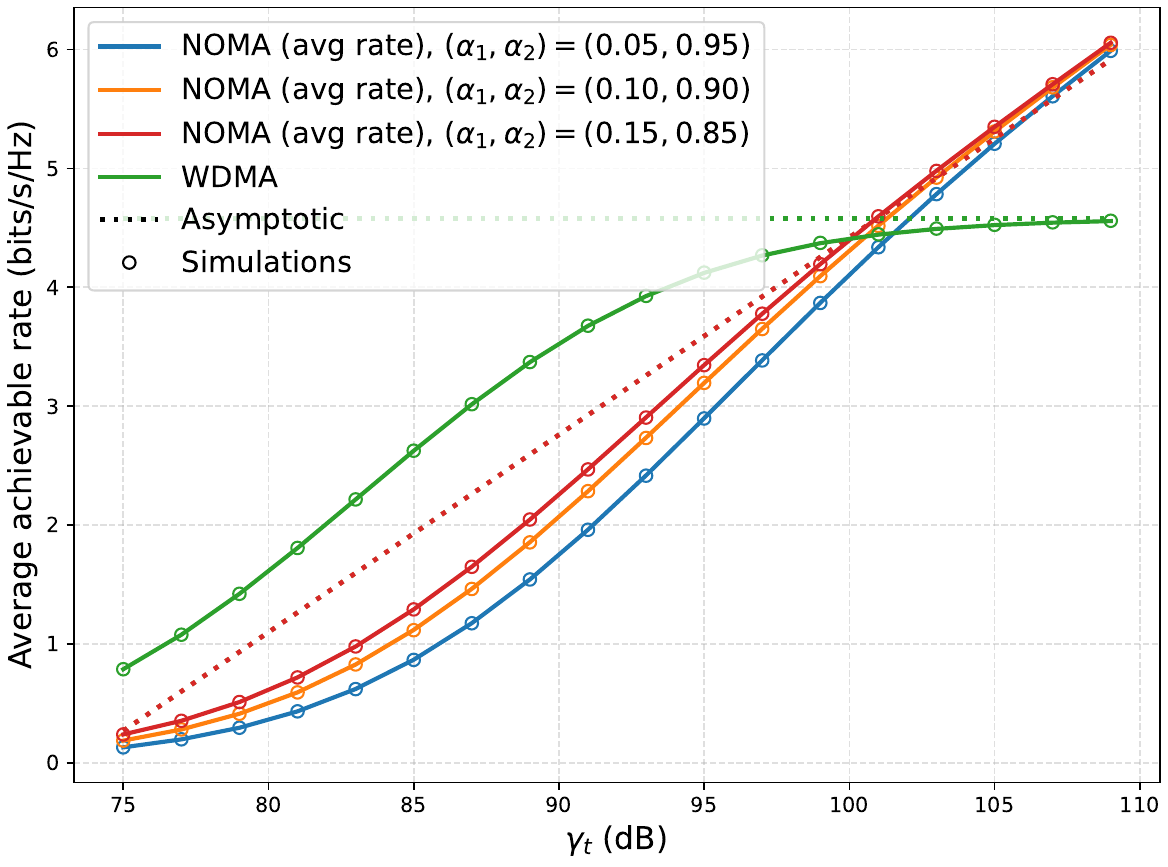}%
  \label{fig:rate_snr_alpha}}
\caption{Performance comparison of WDMA and NOMA under different NOMA power allocation settings $(\alpha_1,\alpha_2)$.}
\label{fig:perf_alpha}
\end{figure}

Fig.~\ref{fig:perf_alpha} shows the impact of NOMA power allocation. WDMA performs better in the low-SNR regime, while NOMA becomes superior at sufficiently high SNR. Increasing $\alpha_1$ improves the near-user rate after SIC, thus shifting the rate intersection point toward lower SNR; however, it also reduces the power allocated to the far user and pushes the outage intersection point toward higher SNR. 
The asymptotic results further explain this behavior. WDMA converges to floors or ceilings determined by geometry, while the average rate of the two NOMA users has the same trend at high SNR under different power allocations because the terms related to $\alpha_1$ cancel out. Hence, power allocation mainly affects the transition region at finite SNR and the outage performance.

Overall, these results indicate that the preferred multiple-access scheme depends on the SNR regime, UE spatial distribution, and NOMA power allocation. WDMA is attractive at low-to-moderate SNR or with sufficiently separated UE regions, whereas NOMA provides higher spectral efficiency at high SNR due to SIC.

\section{Conclusion}

This paper developed an analytical framework for downlink PASS with WDMA and NOMA, deriving tractable expressions for the outage probability and average achievable rate. The results show that WDMA is preferable at low transmit SNR and with larger UE separation, while NOMA achieves higher rates at high SNR via successive interference cancellation. Moreover, increasing the PA–UE vertical distance degrades the performance of both schemes. These results provide guidelines for selecting an appropriate multiple-access strategy in PASS under different transmit powers and antenna deployments.

\bibliographystyle{IEEEtran}
\bibliography{MyRef}

\end{document}